\newtheorem{theorem}{Theorem}[section]
\newtheorem{lemma}[theorem]{Lemma}
\newtheorem{prop}[theorem]{Proposition}
\newtheorem{defi}[theorem]{Definition}
\newcommand{\reftheo}[1]{Theorem~\ref{#1}}
\newcommand{\refprop}[1]{Proposition~\ref{#1}}
\newcommand{\la}{\left<}
\newcommand{\ra}{\right>}
\newcommand{\supp}{\operatorname{supp}}
\newcommand{\mB}{\mathcal{B}}
\newcommand{\mC}{\mathcal{C}}
\newcommand{\mE}{\mathcal{E}}
\newcommand{\mR}{\mathcal{R}}
\newcommand{\mD}{\mathcal{D}}
\newcommand{\bH}{\mathbb{H}}
\newcommand{\mL}{\mathcal{L}}
\newcommand{\mK}{\mathcal{K}}
\newcommand{\rN}{\mathbb{R}}
\newcommand{\mT}{\mathcal{T}}
\newcommand{\mO}{\mathcal{O}}
\newcommand{\uS}{\mathbb{S}}
\newcommand{\intl}{\int\limits}
\newcommand{\cl}{\operatorname{cl}}
\newcommand{\mV}{\mathcal{V}}
\newcommand{\wf}{\mbox{WF}}
\newcommand{\pdh}{\partial}
\newcommand{\cT}{\mathbb{T}}
\newcommand{\eg}{\varepsilon}
\newcommand{\Llg}{\Lambda}
\newcommand{\ag}{\alpha}
\newcommand{\bg}{\beta}
\newcommand{\ve}{{\bf e}}
\newcommand{\sg}{\sigma}
\newcommand{\ga}{\gamma}
\newcommand{\barni}{\begin{eqnarray*}}
\newcommand{\earni}{\end{eqnarray*}}
\newcommand{\bari}{\begin{eqnarray}}
\newcommand{\eari}{\end{eqnarray}}
\newcommand{\bml}{\begin{multline}}
\newcommand{\eml}{\end{multline}}
\newcommand{\bmlni}{\begin{multline*}}
\newcommand{\emlni}{\end{multline*}}
\title[Artifacts in Generalized X-ray Transform]{On the strength of streak artifacts in limited angle weighted X-ray transform}
\author{Linh V. Nguyen}
\thanks{The research is supported by the NSF grant DMS 1212125}
\address{Department of Mathematics, University of Idaho, Moscow, Idaho 83844, USA}
\email{lnguyen@uidaho.edu}
\begin{document}
\begin{abstract} In this article, we study the limited angle problem for the weighted X-ray transform. We consider the approximate reconstructions by applying two filtered back projection formulas to the limited data. We prove that each resulted operator can be decomposed into the sum of three Fourier integral operators whose symbols are of types $(\varrho,\delta) \neq (1,0)$. The first operator, being a pseudo-differential operator, is responsible for the reconstruction of visible singularities. The other two are responsible for the generation of the artifacts. The theory of Fourier integral operators then implies, in particular, the continuity of the reconstruction operator and geometry of  the artifacts. We then extend the technique developed by the author in [Inverse Problems 31 (2015) 055003] to obtain more refined microlocal estimates for the strength of the artifacts.
\end{abstract}
\maketitle

\section{Introduction}\label{S:intro}
Let us denote by $\uS^1$ the unit circle in $\rN^2$ and $\mu \in C^\infty(\rN^2 \times \uS^1)$ be a strictly positive function. We consider the weighted X-ray (or Radon) transform
$$\mR_{\mu} f(\theta, s) = \int\limits_{x \cdot \theta = s} \mu(x,\theta) \, f(x) \, dx,\quad (\theta,s) \in \uS^1 \times \rN.$$ We are interested in reconstructing $f$ from $\mR_\mu(f)$.

\medskip

When $\mu \equiv 1$, $\mR_\mu$ is the classical X-ray transform $\mR$, which appears in computed X-ray tomography (see, e.g., \cite{Nat}).  The function $f$ can be exactly reconstructed from $\mR(f)$ by the following filtered back-projection inversion formula
\begin{equation} \label{E:Exact} f  =  \mB_0 f := \frac{1}{4 \pi }\mR^* H \frac{d}{ds} \mR f.\end{equation}
Here, $H: \mE'(\rN) \to \mD'(\rN)$ is the Hilbert transform 
$$(H k)(t) = \frac{1}{\pi} ~p.v.~ \intl_{\rN} \frac{k(s)}{t-s} ds,$$
and $\mR^*$ is the formal adjoint of $\mR$, defined by 
$$\mR^*g (x) = \intl_{\uS^1} g(\theta,x \cdot \theta) d\theta.$$
Since $H$ is a nonlocal operator, in order to compute $f$ at any location $x \in \rN^2$, formula (\ref{E:Exact}) needs the data $\mR(\theta,s)$ for all $(\theta,s) \in \uS^1 \times \rN$.

The following local formula, which is called Lambda reconstruction, provides a simple method to reconstruct the singularities of $f$ from $\mR f$
\begin{equation} \label{E:Lambda} \Lambda_0 f :=  \frac{1}{4 \pi} \mR^*(-\pdh_s^2) \mR f.\end{equation}
A main advantage of Lambda reconstruction is that in order to find $\Lambda_0 f(x)$, one only needs the {\bf local} data $\mR(f)(\theta,s)$ in a neighborhood of the set $\{(\theta,s): x \cdot \theta =s \}$. Discussion about Lamda tomography can be found in, e.g., \cite{vainberg1981reconstruction,smith1985mathematical,Fari92,Fari97}. The reader is also referred to \cite{KLM,RamKat-Book} for other kinds of local tomography.
\medskip

There is also a large amount of work dealing with the weighted X-ray transform $\mR_\mu$ \cite{Boman-Quinto, Quinto-80,Quinto-83,Quinto-93-pompiu,Kurusa}. However, there is no explicit exact reconstruction formula for a general function $\mu$. Moreover, the function $f$ may not be uniquely determined from $\mR_\mu f$ \cite{Boman}. On the other hand, local uniqueness holds \cite{Markoe-Quinto}. A special type of the generalized X-ray transform is called attenuated X-ray transform, which appears in  SPECT  (single photon emission computed tomography). The study of attenuated X-ray transform is well established (e.g., \cite{Natterer-79,Natterer-83,Finch-Range-Attenuated,Kazantsev,Novikov-Attenuated,Natterer-01,Novikov-CRA,Nov-Kun,kun-2001-spect,kuchment2014radon}). There are two techniques for inverting the attenuated X-ray transform: the complexification method by \cite{Novikov-Attenuated} and the A-analytic method by \cite{Kazantsev}. The reader is referred to \cite{Finch-Review,Kuchment-review} for comprehensive reviews on the attenuated X-ray transform. 

\medskip

In this article, we are interested in the singularity reconstruction of $f$ from $\mR_\mu f$. Let $\nu \in C^\infty(\rN^2 \times \uS^1)$ be a strictly positive function. We define the following back-projection type operator
$$\mR_\nu^* g(x) = \intl_{\uS^1} g(\theta, x \cdot \theta) \nu(x,\theta) d\theta,$$
and following analogs of $\mB_0$ and $\mL_0$:
\barni \mB f &:=&\frac{1}{4 \pi} \mR_\nu^* H \frac{d}{ds} \mR_\mu f,\\
\Lambda f &:=&  \frac{1}{4 \pi} \mR_\nu^*(-\pdh_s^2) \mR_\mu f.
\earni
Then, $\mB$ and $\Lambda$ are respectively pseudo-differential operators with amplitudes  (see, e.g., \cite{KLM,katsevich1998local}): $$a_{\mB}(x,y,\xi) = \frac{1}{2} \left[\nu(x,\xi) \, \mu(y,\xi) + \nu(x,-\xi) \, \mu(y,-\xi)\right],$$
and $$a_\Llg(x,y,\xi) = \frac{1}{2}|\xi| \, \left[ \nu(x,\xi) \, \mu(y,\xi) +  \nu(x,-\xi) \, \mu(y,-\xi)\right].$$
Here, we have extended $\mu,\nu$ to positively homogeneous functions of degree zero with respect to $\xi$. Since $a_\mB$ is homogenous of degree zero and non-vanishing, $\mB f$ reconstructs all the singularities of $f$ with the exact order. Since $a_\Llg$ is homogenous of degree one and non-vanishing, $\Llg f$ emphasizes all the singularities of $f$ by one order (see more details in \cite{KLM,kuchment2014radon}). 

\medskip

We now consider the limited angle problem (see, e.g., \cite{RamKat-AML1,RamKat-AML2,KLM, Kat-JMAA,FQ13,FQ-Para,kuchment2014radon}): 

\medskip

\begin{center} {\bf $\mR_\mu f(\theta,s)$ is only known for $\theta \in \uS_\mV :=\{(\cos \phi, \sin \phi): \phi_1< \phi< \phi_2\}$, where $0<\phi_1<\phi_2< \pi$. } \end{center}

\medskip

Let us consider  $\kappa \in C^\infty(\overline{\uS}_{\mV})$ such that $\kappa>0$ on $\uS_\mV$.  We then extend $\kappa$ to $\uS^1$ by zero and define the limited angle version of $\mB$ and $\Llg$: 
\barni \mB_{\kappa} f &:=&\frac{1}{4 \pi} \mR_{\nu}^* \, \kappa  \, H \, \frac{d}{ds} \mR_\mu f,\\
\Llg_{\kappa} f &:=&  \frac{1}{4 \pi} \mR_{\nu}^* \, \kappa \, (-\pdh_s^2) \mR_\mu f.
\earni
Then, one can write $\mB_{\kappa}$ and $\Llg_{\kappa}$ as oscillatory integrals:
\begin{equation*}  \mB_\kappa f(x) = \frac{1}{(2 \pi)^2}\intl_{\rN^2} e^{i (x-y) \cdot \xi }  a_{\mB,\kappa}(x,y,\xi) \, f(y) \, d \xi \,dy,\end{equation*}
and
\begin{equation*}  \Llg_\kappa f(x) = \frac{1}{(2 \pi)^2}\intl_{\rN^2} e^{i (x-y) \cdot \xi }  a_{\Llg,\kappa}(x,y,\xi) \, f(y) \, d \xi \,dy. \end{equation*}
Here,
$$a_{\mB,\kappa} (x,y,\xi) = \frac{1}{2} \left[\nu(x,\xi) \, \mu(y,\xi) \, \kappa (\xi/|\xi|) \,   + \nu(x,- \xi) \, \mu(y,-\xi) \, \kappa (-\xi/|\xi|) \right],$$
and $$a_{\Llg,\kappa} (x,y,\xi) =\frac{1}{2} |\xi| \, \left[ \nu(x,\xi) \, \mu(y,\xi) \,\kappa(\xi/|\xi|)+  \nu(x,-\xi) \, \mu(y,-\xi) \,\kappa(-\xi/|\xi|) \right].$$

\medskip

Assume that $\kappa$ vanishes to infinite order at the end points of $\uS_\mV$, then $\mB_{\kappa}$ and $\Llg_{\kappa}$ are pseudo-differential operators with principal symbols respectively: $$\sg_{\mB,\kappa} (x,\xi) = \frac{1}{2} \left[\kappa (\xi/|\xi|) \, \nu(x,\xi) \, \mu(x,\xi) +\kappa (-\xi/|\xi|) \, \nu(x,-\xi) \, \mu(x,-\xi) \right],$$
and $$\sg_{\Llg,\kappa} (x,\xi) = \frac{1}{2} |\xi| \left[\kappa (\xi/|\xi|) \, \nu(x,\xi) \, \mu(x,\xi) +\kappa (-\xi/|\xi|) \, \nu(x,-\xi) \, \mu(x,-\xi) \right].$$
We note that $\sg_{\mB,\kappa} (x,\xi)>0$ and $\sg_{\Llg,\kappa} (x,\xi)>0$ if $\frac{\xi}{|\xi|} \in \uS_\mV$ or $\frac{-\xi}{|\xi|} \in \uS_\mV$. Therefore, $\Llg_{\kappa}$ and $\mB_{\kappa}$ reconstruct all the {\bf visible} singularities, with the same order as $\Llg$ and $\mB$ do, respectively. We recall that visible singularities are all $(x,\xi) \in \wf(f)$ such that $\frac{\xi}{|\xi|} \in \uS_\mV$ or $\frac{-\xi}{|\xi|} \in \uS_\mV$ (see, e.g., \cite{KLM}). Meanwhile, they do not generate any added singularities (i.e., artifacts). The reader is referred to \cite{KLM} for detailed discussion.

\medskip

Let us now consider the case when $\kappa$ only vanishes to a finite order $k$ at the end points of $\uS_\mV$  \footnote{That is, $\kappa^{(l)} (\theta(\phi))=0$, $0 \leq l \leq k-1$ and $\kappa^{(k)} (\theta(\phi))\neq 0$, for $\phi=\phi_1$ and $\phi=\phi_2 $.}. We denote $\ve_1=(\cos \phi_1, \sin \phi_1), \quad \ve_2=(\cos \phi_2, \sin \phi_2)$. Then, the amplitudes $a_{\mB,\kappa}$ and $a_{\Llg,\kappa}$ are not smooth with respect to $\xi$, across the lines
$$\ell_1= \{\xi \in \rN^2: \xi = r \ve_1: r  \in \rN \}$$ and $$\ell_2= \{\xi \in \rN^2: \xi = r \ve_2: r  \in \rN \}.$$
Therefore, $\mB_{\kappa}$ and $\Llg_{\kappa}$
are no longer pseudo-differential operators in the standard sense. They are, instead, pseudo-differential operators with singular symbols.  Moreover, it is observed that  $\mB_{\kappa} f$ and $\Llg_{\kappa}f$ may contain artifacts (see, e.g. \cite{Kat-JMAA,Frikel-Quinto-2015}). {\bf The goal of this article is to study $\mB_{\kappa}$ and $\Llg_{\kappa}$ in this case}. 

In \cite{Kat-JMAA}, Katsevich obtains the geometric characterization of the artifacts. He also analyzes the strength of the artifacts for the case $f$ only has jump singularities. His approach relies on the direct estimates for oscillatory integrals.  Recently, Frikel and Quinto \cite{FQ-Para,Frikel-Quinto-2015} provide a general paradigm to study the geometric description for artifacts arising in limited angle problem of the weighted X-ray and Radon transforms. Their approach relies on the calculus of wave front set for compositions.

This article, although having some overlaps with the above mentioned works, has some distinct features. Firstly, we prove a decomposition of $\mT$ as a sum of three Fourier operators with non-classical symbols. This result, being interesting in itself, implies the geometric description of the artifacts. Moreover, it provides the continuity of $\mT$ between Sobolev spaces. Such a continuity has not been obtained before. Secondly, we analyze the strength of the artifacts, when $f$ is an {\bf arbitrary} compactly supported distribution. Our approach for this goal is a refinement of the previous work \cite{Streak-Artifacts}. 

It is worth mentioning that some similar analysis of artifacts for spherical Radon transform has been done in several works \cite{FQ15,SIMA,BFNguyen}. 



To proceed, we will consider a family of operators which contains $\mB_{\kappa}$ and $\Llg_{\kappa}$ as special cases. Let $\mT$ be the linear operator whose Schwartz kernel is:
\begin{equation} \label{E:sch} \mK(x,y) = \frac{1}{(2 \pi)^2}\intl_{\rN^2} e^{i (x-y) \cdot \xi }  a(x,y,\xi) \, \chi(\xi) \, d \xi,\end{equation} where $a \in S^m((\rN^2 \times \rN^2) \times \rN^2)$ \footnote{See the definition of the symbol class $S^m$  in Definition~\ref{D:Sym} and the following discussion. }.
Here, $\chi$ is the characteristic function of $\cl(\mathbb{W})$, where 
$$\mathbb{W} = \rN \, \uS_\mV = \{\xi = r (\cos \theta, \sin \theta): \phi_1 < \theta < \phi_2, \, r \in \rN\}.$$
 Obviously, $\mB_{\kappa}$ and $\Llg_{\kappa}$ are special cases of $\mT$ (with $m=0$ and $m=1$, respectively). In the sequel, we analyze $\mT$, and then interpret our results to $\mB_{\kappa}$ and $\Llg_{\kappa}$.

\medskip

Let $ \Delta \subset (\mathbb{T}^*\rN^{2} \setminus 0) \times (\mathbb{T}^*\rN^{2} \setminus 0)$ be the diagonal relation $$\Delta = \{(x,\xi; x, \xi): (x,\xi) \in \cT^*\rN^{2} \setminus 0\},$$
and $$\Delta_0=\{(x,\xi; x, \xi)\in \Delta: \xi \in \cl(\mathbb{W})\}.$$
For $j=1,2$, we define $ \mC_j \subset (\mathbb{T}^*\rN^{2} \setminus 0) \times (\mathbb{T}^*\rN^{2} \setminus 0)$ by
\begin{eqnarray*}
\mC_j = \{(x,\ga\, \ve_j; x + t \ve_j^\perp, \ga\, \ve_j): x \in \rN^2,~\ga,t \in \rN,~\ga \neq 0\}.
\end{eqnarray*}
Let us recall the definition of the symbol classes $S^m_{\varrho,\delta}(X \times \rN^N)$ (see, e.g., \cite{hormander71fourier,Shubin-book}):
\begin{defi} \label{D:Sym}
Let $m, \varrho$ and $\delta$ be real numbers, $0 \leq \varrho, \delta \leq 1$. The class $S_{\varrho,\delta}^m(X \times \rN^N)$ consists of all functions $\rho(x,\xi) \in C^\infty(X \times (\rN^N \setminus 0))$ such that for any multi-indices $\ag, \bg$ and any compact set $K \Subset X$ there exists a constant $C=C_{\ag,\bg,K}$ for which
$$|\pdh_\xi^\ag \pdh_x^\bg \rho(x,\xi)| \leq C (1+ |\xi|)^{m- \varrho |\ag| + \delta |\bg|},$$ for all $(x,\xi) \in X \times \rN^N$. 
\end{defi}
An element $\rho \in S_{\varrho,\delta}^m(X \times \rN^N)$ is called a symbol of order $m$ and type $(\varrho,\delta)$. We also denote $S^m(X \times \rN^N) =S^m_{1, 0}(X \times \rN^N)$, which are the most common type of symbol classes (especially in tomography). In this article, however, we need to make use of symbol classes $S_{\varrho,\delta}^m(X \times \rN^N)$ where $(\varrho,\delta) \neq (1, 0)$. Let $\mC$ be a Lagrangian in the cotangent bundle $\cT^* X$ of $X$. We will denote by $I^m_{\varrho,\delta}(\mC)$ the class of Fourier integral distributions of order $m$ whose symbol is of type $(\varrho,\delta)$ and canonical relation is a subset of $\mC$. The order of a Fourier integral distribution is not necessarily the same as the order of its symbol (we will elaborate on this fact later when needed). The reader is referred to \cite{hormander71fourier,Shubin-book} for the detailed treatment on Fourier integral distributions. 

\medskip

Here is the main result of this article:
\begin{theorem}\label{T:Main1}
We have
\begin{itemize}
\item[a)] For any $n>0$, we can write $$\mK = \mK_{0,n} + \mK_{1,n} + \mK_{2,n},$$ where $$\mK_{0,n} \in I^m_{\frac{1}{n}, 0}(\Delta), \quad \mK_{1,n} \in I^{m+\frac{1}{n}-\frac{1}{2}}_{1,\frac{1}{n}}(\mC_1), \quad\mK_{2,n} \in I^{m+\frac{1}{n}-\frac{1}{2}}_{1,\frac{1}{n}}(\mC_2).$$ Moreover, the symbol $\sg(x,\xi)$ of $\mK_{0,n}$ satisfies 
\begin{itemize}
\item[i)] $\sg(x,\xi) - a(x,x,\xi)  \in S_{\frac{1}{n},0}^{m-\frac{1}{n}}(\rN^2 \times V),$ for any closed conic set $V \subset \mathbb{W}$; and 
\item[ii)] $\sg(x,\xi) \in S_{\frac{1}{n},0}^{-\infty}(\rN^2 \times V),$ for any closed conic set $V \subset \rN^2 \setminus \cl(\mathbb{W})$.
\end{itemize}
\medskip

\item[b)]  Assume that $a(x,y,\xi)$ vanishes to order $k$ across the line $\ell_i$. Then, near $\mC_j \setminus \Delta$, $\mK $ is microlocally in the space $I^{m-k-1/2}(\mC_j)$. 
\end{itemize}
\end{theorem}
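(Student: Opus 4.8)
The plan is to regularize the sharp cutoff $\chi$ by an anisotropic, frequency–dependent smoothing across the two singular lines $\ell_1,\ell_2$, and to read off the three pieces from this regularization. Fix $n\geq 1$ and set $\varrho=1/n$. Writing $\xi=|\xi|(\cos\theta,\sin\theta)$, I would construct a smooth cutoff $\sigma_0(\xi)$ that equals $\chi$ outside angular neighborhoods of $\phi_1,\phi_2$ (and their antipodes) and interpolates across each of them over an angular width $\sim|\xi|^{-(1-1/n)}$; concretely, near $\phi_j$ one takes $\beta\!\left(|\xi|^{1-1/n}(\theta-\phi_j)\right)$ for a fixed one–dimensional smooth step $\beta$. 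A chain–rule computation then shows that each $\xi$-derivative of $\sigma_0$ costs a factor $|\xi|^{-1/n}$, so $a\,\sigma_0\in S^m_{1/n,0}$, while $\chi-\sigma_0$ is supported in the thin conic sets $|\theta-\phi_j|\lesssim|\xi|^{-(1-1/n)}$, i.e. $|\xi\cdot\ve_j^\perp|\lesssim|\xi|^{1/n}$. This produces the splitting $\mK=\mK_{0,n}+\mK_{1,n}+\mK_{2,n}$, where $\mK_{0,n}$ has amplitude $a\,\sigma_0$ and $\mK_{j,n}$ has amplitude $a\,(\chi-\sigma_0)$ localized, by a conic partition of unity, near $\ell_j$.

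For the diagonal piece, part (a)(i)--(ii) is the standard amplitude–to–symbol reduction carried out inside the calculus of type $(1/n,0)$ symbols. Off $\cl(\mathbb{W})$ one has $\sigma_0\equiv 0$ for large $|\xi|$, so $\mK_{0,n}$ is smoothing there, giving (ii). On a closed cone $V\subset\mathbb{W}$ one has $\sigma_0\equiv 1$ for large $|\xi|$, so the amplitude is $a(x,y,\xi)$; Taylor–expanding in $y$ about $x$ and integrating by parts in $\xi$ as usual reduces the amplitude to the symbol $\sg(x,\xi)$, each step now gaining only $|\xi|^{-1/n}$, whence $\sg-a(x,x,\xi)\in S^{m-1/n}_{1/n,0}$, which is (i).

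For the two artifact pieces I would pass to the frame $(\ve_j,\ve_j^\perp)$, write $\xi=p\,\ve_j+q\,\ve_j^\perp$ and $u=(x-y)\cdot\ve_j^\perp$, and perform the $q$-integration first. This yields the one–phase–variable representation $\mK_{j,n}(x,y)=\int_{\rN}e^{i\,p\,(x-y)\cdot\ve_j}\,A(x,y,p)\,dp$ with $A(x,y,p)=\int e^{iqu}\,a\,(\chi-\sigma_0)\,dq$; its phase $\phi=p\,(x-y)\cdot\ve_j$ is nondegenerate, and its critical set $\{(x-y)\cdot\ve_j=0,\ \xi=p\,\ve_j\}$ is exactly $\mC_j$. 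It remains to estimate $A$. For part (a), the support $|q|\lesssim|p|^{1/n}$ and the bound $|a|\lesssim|p|^{m}$ give a symbol of order $m+1/n$ in $p$; moreover each $x$- or $y$-derivative brings down a factor $q\sim|p|^{1/n}$ from $e^{iqu}$, so $A$ is of type $(1,1/n)$. With $N$ phase variables and base dimension two the order convention reads $\mu=\tilde\mu+N/2-1$ (checked against the pseudodifferential normalization $N=2$, $\mu=\tilde\mu$), so $N=1$ gives $\mu=\tilde\mu-1/2$ and hence $\mK_{j,n}\in I^{m+1/n-1/2}_{1,1/n}(\mC_j)$. For part (b) one works away from the diagonal, $u\neq 0$; the hypothesis $a=O(q^{k})$ across $\ell_j$ lets one integrate by parts to obtain $A(x,y,p)\sim c_k\,\tilde a(x,y,p\,\ve_j)\,u^{-k-1}$, a \emph{classical} symbol of order $m-k$ in $p$ for which the sharp cutoff is no longer felt, so $\mK\in I^{m-k-1/2}(\mC_j)$ microlocally near $\mC_j\setminus\Delta$.

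The main obstacle is the bookkeeping of the non–classical symbol classes: verifying the type $(1,1/n)$ estimates for $A$ uniformly (in particular that $x,y$-derivatives cost exactly $|p|^{1/n}$ and no more, and that the transition region of $\sigma_0$ contributes only acceptable remainders), and correctly matching the Fourier–integral order conventions when $(\varrho,\delta)\neq(1,0)$, where the order of the distribution differs from that of its symbol. Secondary technical points are the globalization over the four rays of the double wedge via the conic partition of unity, the treatment of the overlap of $\mC_j$ with $\Delta$ near $u=0$ (which is excluded in part (b)), and controlling the lower–order error terms generated by the Taylor reduction and the integrations by parts.
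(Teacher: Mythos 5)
Your proposal is correct and takes essentially the same approach as the paper: your frequency-dependent angular smoothing of $\chi$ at transverse scale $|\xi|^{1/n}$ is the paper's nonlinear cutoff $c\big(\xi_2^n/\langle\xi\rangle\big)$ in disguise, and the remaining steps (the type-$(\tfrac1n,0)$ diagonal piece with the symbol statements i)--ii), integrating out the transverse frequency to get a one-phase-variable amplitude of order $m+\tfrac1n$ and type $(1,\tfrac1n)$ with the $-\tfrac12$ order shift, and the off-diagonal integration by parts producing the boundary term $\partial_q^k a\,u^{-k-1}$ for part b)) coincide with Proposition~\ref{P:decompose} and its proof. The only deviations are organizational --- you regularize first and localize conically afterwards, whereas the paper localizes near each $\ell_j$ first and then applies the model computation --- together with the remainder estimate after the integration by parts, which you gloss with ``$\sim$'' and the paper isolates as Lemma~\ref{L:rN}.
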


Let $\mT_{0,n}$, $\mT_{1,n}$, and $\mT_{2,n}$ be the operators whose Schwartz kernels are $\mK_{0,n}$, $\mK_{1,n}$, and $\mK_{2,n}$, respectively. We obtain from Theorem~\ref{T:Main1}~a) the following continuity:
\begin{itemize}
\item[(C.1)] Since $\mK_{0,n} \in I^m_{\frac{1}{n}, 0}(\Delta)$, $\mT_{0,n}$ is a continuous map from $H^{s}_{comp} (\rN^2)$ to $H^{s-m}_{loc}(\rN^2)$ (see, e.g., \cite[Theorem~7.1]{Shubin-book}), and 
\item[(C.2)] Since $ \mK_{1,n} \in I^{m+\frac{1}{n}-\frac{1}{2}}_{1,\frac{1}{n}}(\mC_1)$, and $\mK_{2,n} \in I^{m+\frac{1}{n}-\frac{1}{2}}_{1,\frac{1}{n}}(\mC_2)$, $\mT_{1,n}$ and $\mT_{2,n}$ are continuous maps from $H^s_{comp}(\rN^2)$ to $H_{loc}^{s-m-\frac{1}{n}}(\rN^2)$ (see, e.g., \cite[Theorem 4.3.2]{hormander71fourier}). It should be noticed here the difference between the order $m+\frac{1}{n}-\frac{1}{2}$ of $\mK_{1,n}, \mK_{2,n}$ and the aforementioned mapping property. This comes from the fact that the canonical relations $\mC_1$ and $\mC_2$ are not local graph. Their left and right projections are fibered of dimension one. 
\end{itemize}

(C.1) and (C.2), in particular, imply the continuity of $\mT$ from $H^s_{comp}(\rN^2)$ to $H_{loc}^{s-m-\frac{1}{n}}(\rN^2)$, for any $n>0$. To the best of our knowledge, this continuity has not been proved anywhere. We note here that if $a(x,y,\xi)$ vanishes to infinite order at the boundary of $\mathbb{W}$, then   from the standard theory of pseudo-differential operator, $\mT$ is a continuous map from $H^s_{comp}(\rN^2)$ to $H_{loc}^{s-m}(\rN^2)$. It is interesting to see whether such optimal bound also holds for the case $a(x,y,\xi)$ only vanishes to finite order, as  being considered in this article.

Let us interpret \reftheo{T:Main1} further as properties of $\mB_\kappa$ and $\Llg_\kappa$ (where $m=0$ and $m=1$, respectively). From \reftheo{T:Main1}~a), we obtain $$\wf(\mK) \subset \Delta_0 \cup \mC_1 \cup \mC_2. $$
This result was obtained in \cite{Kat-JMAA,FQ-Para} by other methods. It, in particular, describes the geometry of the artifacts introduced in $\mB_{\kappa} f$ and $\Llg_{\kappa} f$ \footnote{An artifact in $\mT f$ is a singularity $(x,\xi) \in \wf(\mT f)$ such that $(x,\xi) \not \in \wf(f)$.}. The artifacts are generated by the ``edge" singularities via the canonical relations $\mC_1$ and $\mC_2$, as follows. An ``edge" singularity is an element $(x,\xi) \in \wf(f)$ such that $\xi \parallel \ve_j$ ($j=1,2$). It may generate the artifacts at other elements $(y,\xi) \neq (x,\xi)$ satisfying $(y,\xi;x,\xi) \in \mC_j$. That is, the artifacts generated by $(x,\xi)$ are located along the line passing through $x$ and orthogonal to $\ve_j$.  The same phenomenon in the limited angle problem of the standard X-ray transform was presented in \cite{FQ13,Streak-Artifacts}. 

From (C.1), we obtain that the reconstructed singularities are at most $m$ order(s) stronger than the original ones. Moreover, let us recall that the amplitudes $a_{\mB,\kappa}$ and $a_{\Llg,\kappa}$ (of $\mB_\kappa$ and $\Llg_\kappa$, respectively) are nonvanishing on $\mathbb{W}$. The formula for the symbol of $\mK_{0,n}$ in Theorem~\ref{T:Main1}~a) shows that the singularity $(x,\xi) \in \wf(f) $ satisfying $\xi \in \mathbb{W}$ (i.e., $(x,\xi)$ is a {\bf visible} singularity) is reconstructed and the reconstructed singularity is exactly $m$ order stronger than the original singularity. On the other hand, the singularity $(x,\xi) \in \wf(f)$ such that $\xi \not \in \overline{\mathbb{W}}$ (i.e., $(x,\xi)$ is an {\bf invisible} singularity) is completely smoothened out by $\mB_{\kappa}$ and $\Llg_{\kappa}$. This is due to the fact that the symbol of $\mK_{0,n}$ vanishes for $\xi \not \in \overline{\mathbb{W}}$ (see Theorem~\ref{T:Main1} a) ii)). Let us mention that the descriptions presented in this paragraph have been obtained previously in several works \cite{Kat-JMAA,FQ13,Streak-Artifacts} by other methods. 



\reftheo{T:Main1}~b) is a generalization of \cite[Theorem 3.1~b)]{Streak-Artifacts}, where the standard X-ray transform was considered. It provides explicit bounds for the artifacts, as follow (see \cite{Streak-Artifacts} for the detailed explanation): 
\begin{itemize}
\item The artifacts are {\bf at most} $(m-k)$ order(s) stronger than their strongest generating singularities if $m>k$. The artifacts are {\bf at most} as strong as their strongest generating singularities if $m=k$. The artifacts are {\bf at least} $(k-m)$ order(s) smoother than their strongest generating singularities if $k>m$.

\item Assume that the artifact $(x,\xi) \in \wf(\mT f)$ has finitely many generating singularities $(y,\xi) \in \wf(f)$, each of them is conormal to a curve $S$ having non-vanishing curvature at $y$. Then, the artifact is {\bf at most} $(m-k-\frac{1}{2})$ order(s) stronger than its strongest generating singularity if $m>k- \frac{1}{2}$. It is {\bf at most} as strong as its strongest generating singularity if $m=k-\frac{1}{2}$.  It is {\bf at least} $(k-m-\frac{1}{2})$ order(s) smoother than its strongest generating singularity if $k>m-\frac{1}{2}$. 
\end{itemize}
These bounds for the artifacts are better than what can be obtained from (C.2). However, it should be noted Theorem~\ref{T:Main1}~b) does not explain the strength of $(x,\xi) \in \wf(\mT f)$ if $(x,\xi)$ is an ``edge" singularity of $f$ (i.e., $(x,\xi) \in \wf(f)$ and $\xi \parallel \ve_j$ for $j=1$ or $j=2$). Meanwhile, (C.2) implies that such singularity $(x,\xi) \in \wf(\mT f)$ is at most $m+\frac{1}{n}$ order(s) stronger than $(x,\xi) \in \wf(f)$, for any $n$. 




Let us briefly discuss the main ideas for the proof of \reftheo{T:Main1}. The proof of \reftheo{T:Main1}~a) uses a nonlinear cutoff for the phase variable $\xi$. It is a modification of the well known parabolic cutoff, first used by Boutet de Monvel \cite{Monvel} to deal with hypoelliptic operators, and later by Guillemin to deal with the pseudo-differential operators with singular symbol (see the discussion in \cite{GrU-Functional}). The proof of \reftheo{T:Main1}~b) is similar to that of \cite[Theorem 3.1]{Streak-Artifacts}. Namely, it makes use of some proper integrations by parts. However, it is worth mentioning that our argument in this article is cleaner than that in \cite{Streak-Artifacts}, where the special case - standard X-ray transform - is studied.

A deep theory of pseudo-differential operators with singular symbols was developed by Guillemin, Melrose, Uhlmann and others (see, e.g., \cite{MU,GU,AnUhl}). The use of that theory to analyze the X-ray transform, when the canonical relation is not a local canonical graph, was pioneered by  Greenleaf and Uhlmann \cite{GrUDuke,GrUCon}. It has been then exploited intensively to analyze other imaging scenarios (e.g., \cite{FLU,NoChe,FeleaCPDE,FeleaQuinto,Suresh,FeleaSAR,Am-singular}). Although we do not make use of it explicitly, our analysis is inspired by that theory.

In the next section, we will present the proof of \reftheo{T:Main1}. We will start by studying a model distribution defined by an oscillatory integral, whose amplitude is in the class $S^m$ when $\xi$ is away from a straight line through the origin. We show that it can be decomposed into two parts. One belongs to $I^m_{\frac{1}{n},0}(\Delta)$ and the other to $I^{m+\frac{1}{n}-\frac{1}{2}}_{1,\frac{1}{n}}(\mC)$, where $\mC$ is a Lagrangian defined later. 
Moreover, the model distribution is microlocally in the space $I^{m-k-\frac{1}{2}}(\mC)$ near $\mC \setminus \Delta$. With all these results in hands, we then prove \reftheo{T:Main1} by a simple partition of unity argument.

\section{Proof of \reftheo{T:Main1}}
\subsection{Model oscillatory integrals}\label{S:Model}
\noindent Let $a(x,y,\xi) \in S^m((\rN^2 \times \rN^2) \times \rN^2)$ such that for any $(x,y) \in \rN^2 \times \rN^2$ there is $M_{x,y}>0$: \begin{equation} \label{E:sup} \supp a(x,y,.) \subset\{\xi: |\xi_2| \leq M_{x,y} \, |\xi_1|\}.\end{equation} We consider the oscillatory integral 
\bari \label{E:mupm} \mK_\pm(x,y) =\frac{1}{(2\pi)^2} \intl_{\rN^2} e^{i (x-y) \cdot \xi } \, a(x,y,\xi) \, \bH(\pm \xi_2) \, d \xi. \eari
Here, $\bH$ is the Heaviside function, defined by
\begin{eqnarray*}
\bH(s) = \left\{\begin{array}{l}1,\quad s \geq 0, \\[3 pt] 0, \quad s<0. \end{array} \right.
\end{eqnarray*}
We also recall the diagonal canonical relation in $(\cT^* \rN^2 \setminus 0) \times (\cT^* \rN^2 \setminus 0)$ $$\Delta = \{(x,\xi; x, \xi): (x,\xi) \in \cT^*\rN^{2} \setminus 0\},$$
and define $\mC \subset (\cT^* \rN^2 \setminus 0) \times (\cT^* \rN^2 \setminus 0)$ by
$$\mC = \{(x,\xi; y, \xi) \in (\cT^* \rN^2 \setminus 0) \times (\cT^* \rN^2 \setminus 0): x_1-y_1=0,\xi_2=0\}.$$

\medskip
\noindent The following results characterize the distribution $\mK_\pm$:

\begin{prop} \label{P:decompose} We have 
\begin{itemize}
\item[a)] For any $n \geq 2$, we can write $\mK_\pm= \mK_1 + \mK_2$
where $\mK_1 \in I^m_{\frac{1}{n},0} (\Delta)$ and $\mK_2 \in  I^{m+ \frac{1}{n} - \frac{1}{2}}_{1,\frac{1}{n}}(\mC)$. Moreover, the symbol $\sg(x,\xi)$ of $\mK_1$ satisfies: 
\begin{itemize}
\item[i)] $\sg(x,\xi) - a(x,x,\xi) \in S_{\frac{1}{n},0}^{m-\frac{1}{n}}(\rN^2 \times V)$ for any conic closed conic set $V \subset \{ \xi \in \rN^2: \pm \xi_2 > 0\}$, and 
\item[ii)] $\sg(x,\xi) \in S_{\frac{1}{n},0}^{-\infty}(\rN^2 \times V)$ for any conic closed conic set $V \subset \{ \xi \in \rN^2: \pm \xi_2 < 0\}$. 
\end{itemize}
\item[b)] Assume that $a(x,y,\xi)$ vanishes up to order $k$ at $\xi_2 = 0$. Then, $\mK_\pm$ is microlocally in the space  $I^{m-k-\frac{1}{2}}(\mC)$ near $\mC \setminus \Delta$.
\end{itemize}

\end{prop}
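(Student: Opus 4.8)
The plan is to peel the jump of $\bH(\pm\xi_2)$ at $\{\xi_2=0\}$ away from the region where $a\,\bH(\pm\xi_2)$ is genuinely smooth, by a \emph{nonlinear} cutoff adapted to the scale $|\xi|^{1/n}$. Fix $\phi\in C^\infty_c(\rN)$ with $\phi\equiv1$ near $0$, and split $\mK_\pm=\mK_1+\mK_2$, where $\mK_2$ carries the factor $\phi(\xi_2\,|\xi|^{-1/n})$ and $\mK_1$ the complementary factor $1-\phi(\xi_2\,|\xi|^{-1/n})$. On $\supp\mK_1$ one has $|\xi_2|\gtrsim|\xi|^{1/n}$, so $\bH(\pm\xi_2)$ is locally constant there and the amplitude $b_1=a\,\bH(\pm\xi_2)\big[1-\phi(\xi_2|\xi|^{-1/n})\big]$ is smooth; each $\pdh_{\xi_2}$ landing on the cutoff costs one factor $|\xi|^{-1/n}$ (and each $\pdh_{\xi_1}$ a better $|\xi|^{-1}$), while $a\in S^m$ and \mref{E:sup} (which gives $|\xi|\sim|\xi_1|$ on $\supp a$) contribute only standard gains. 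Hence $b_1\in S^m_{1/n,0}$ and $\mK_1\in I^m_{1/n,0}(\Delta)$. Since $0=\delta<\varrho=1/n$, the amplitude-to-symbol reduction of the $(1/n,0)$ calculus applies, with successive terms dropping by $1/n$; its leading term $b_1(x,x,\xi)$ equals $a(x,x,\xi)$ on any closed cone in $\{\pm\xi_2>0\}$ and vanishes on any closed cone in $\{\pm\xi_2<0\}$, which is exactly i) and ii). The exponent $1/n$ is the balanced one: it is simultaneously what makes $\mK_1$ of type $(1/n,0)$ and $\mK_2$ of type $(1,1/n)$, and it is the analogue, for the rate $1/n$, of the parabolic cutoff of Boutet de Monvel and Guillemin.

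For $\mK_2$ I would integrate out $\xi_2$ first and use the reduced phase $\Phi=(x_1-y_1)\xi_1$, which is a nondegenerate phase with the single phase variable $\xi_1$ parametrizing $\mC$ (its stationary condition $\pdh_{\xi_1}\Phi=x_1-y_1=0$ and differential $d_{(x,y)}\Phi=(\xi_1,0,-\xi_1,0)$ reproduce $\mC$, with $x_2,y_2$ free). The resulting amplitude $A(x,y,\xi_1)=\int e^{i(x_2-y_2)\xi_2}\,a\,\bH(\pm\xi_2)\,\phi(\xi_2|\xi|^{-1/n})\,d\xi_2$ is, uniformly on compact sets in $(x,y)$, a symbol in $\xi_1$ of order $m+1/n$ and type $(1,1/n)$: the $\xi_2$-support of width $|\xi|^{1/n}$ raises the order of $a$ by $1/n$, each $\pdh_{\xi_1}$ gains a full $|\xi_1|^{-1}$, and each $\pdh_{x_2}$ (or $\pdh_{y_2}$) costs one power $|\xi_2|\lesssim|\xi|^{1/n}$. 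With $\dim(X\times Y)=4$ and $N=1$ phase variable, H\"ormander's convention turns amplitude order $\mu'$ into FIO order $\mu'-\dim(X\times Y)/4+N/2=\mu'-\tfrac12$; taking $\mu'=m+1/n$ yields $\mK_2\in I^{m+1/n-1/2}_{1,1/n}(\mC)$, proving a).

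For b) I would localize microlocally near a point of $\mC\setminus\Delta$, where $x_1=y_1$, the covector is $(\xi_1,0)$, but $t:=x_2-y_2\neq0$. There $\mK_1$ is microlocally smooth (its wavefront lies in $\Delta$), so only the part near $\{\xi_2=0\}$ contributes, and the nonlinear cutoff may be replaced by an ordinary conic one $\phi(\xi_2/\eg|\xi_1|)$. Because $t\ne0$ I would integrate by parts in $\xi_2$: as $a$ vanishes to order $k$ at $\xi_2=0$, the first $k$ boundary contributions at $\xi_2=0$ all vanish, so the leading surviving term is $\sim (it)^{-k-1}\,\pdh_{\xi_2}^k a(x,y,\xi_1,0)$, of order $m-k$ in $\xi_1$; the contributions from the cutoff endpoint are microlocally negligible off $\Delta$. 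Now the cutoff is linearly homogeneous and $t$ is bounded away from $0$, so the reduced amplitude is a \emph{classical} symbol of order $m-k$ and type $(1,0)$ (in fact $\pdh_{x_2}$ lowers the order, since $\xi_2\,a$ still vanishes at $\xi_2=0$). The same order convention, now with $\mu'=m-k$, places $\mK_\pm$ microlocally in $I^{m-k-1/2}(\mC)$ near $\mC\setminus\Delta$.

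The main obstacle is the rigorous identification of $\mK_2$, and of $\mK_\pm$ near $\mC\setminus\Delta$, as genuine Lagrangian distributions in the stated classes. Because the two projections of $\mC$ are fibered of dimension one, $\mC$ is not a local canonical graph and the naive phase $(x-y)\cdot\xi$ does \emph{not} parametrize it nondegenerately; one must justify the passage to the reduced phase $(x_1-y_1)\xi_1$ and verify that the amplitude obtained after integrating out $\xi_2$ genuinely satisfies the type $(1,1/n)$ (resp.\ $(1,0)$) estimates uniformly on compact $(x,y)$-sets, \emph{including} its dependence on $x_2-y_2$, and that the order bookkeeping through H\"ormander's convention is exactly as claimed. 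The delicate quantitative point is the uniform control of the $\xi_2$-integral near $\xi_2=0$, where the Heaviside jump meets the growing domain $|\xi_2|\lesssim|\xi|^{1/n}$; the hypothesis $n\ge2$ (so that $\delta=1/n\le\tfrac12<1=\varrho$) is precisely what keeps the construction inside the range where this $(\varrho,\delta)$ calculus is valid.
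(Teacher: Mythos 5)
Your proposal is correct and follows essentially the same route as the paper: the same nonlinear cutoff at the scale $|\xi|^{1/n}$ splitting $\mK_\pm$ into a type-$(\frac{1}{n},0)$ pseudodifferential part (with the symbol reduction giving i) and ii)) and a part that, after integrating out $\xi_2$, becomes a Lagrangian distribution on $\mC$ of amplitude order $m+\frac{1}{n}$ and type $(1,\frac{1}{n})$, hence of order $m+\frac{1}{n}-\frac{1}{2}$; and, for b), the same $(k+1)$-fold integration by parts in $\xi_2$ whose boundary term $\pdh_{\xi_2}^k a(x,y,\xi_1,0)/[i(y_2-x_2)]^{k+1}$ has order $m-k$, yielding $I^{m-k-\frac{1}{2}}(\mC)$. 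The only difference is bookkeeping: the paper works directly with $\mK_\pm$ (using the support condition \mref{E:sup}) and controls the integral remainder by a dedicated symbol estimate (Lemma~\ref{L:rN}), whereas you first discard $\mK_1$ microlocally near $\mC \setminus \Delta$ and use a conic cutoff, which amounts to the same thing.
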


\begin{proof}  We only consider $\mK_+$, since the proof for $\mK_-$ is similar. 

\medskip

\noindent We first prove a). Let $c \in C^\infty(\rN)$ be a cut-off function satisfying: $c(\tau)=1$ for $|\tau| \leq 1$ and $c(\tau) =0$ for $|\tau| \geq 2$. Let us define \footnote{Here, we use the Japanese bracket $\left< . \right>$ convention: $\left<\xi \right> = (1+|\xi|^2)^{1/2}$.}
\barni \mK_1(x,y) =\frac{1}{(2\pi)^2} \intl_{\rN^2} e^{i (x-y) \cdot \xi } \,\Big[1- c\big(\xi_2^n/\left<\xi\right>\big)\Big] \, a(x,y,\xi) \, \bH(\xi_2) \, d \xi,
\earni
and 
\barni \mK_2(x,y) =\frac{1}{(2\pi)^2} \intl_{\rN^2} e^{i (x-y) \cdot \xi } \,c\big(\xi_2^n/ \la \xi \ra \big)\, a(x,y,\xi) \, \bH(\xi_2) \, d \xi. \earni
Then, $\mK_+ =\mK_1 + \mK_2$. We now prove that $\mK_1$ and $\mK_2$ satisfy the properties stated in \refprop{P:decompose}~a).

\medskip

To analyze $\mK_1$, let us denote $$b_1(x,y,\xi) = \Big[1- c\big(\xi_2^n/\left<\xi\right>\big)\Big] \, a(x,y,\xi) \, \bH(\xi_2).$$
Then, $b_1 \in C^\infty((\rN^2 \times \rN^2) \times (\rN^2 \setminus 0))$. Moreover, for any multi-index $\ag=(\ag_1,\ag_2)$:
$$\pdh_\xi^{\ag} b_1(x,y,\xi) = \sum_{l=0}^{\ag_2} \sum_{k=0}^{\ag_1}  \pdh_{\xi_2}^{l} \pdh_{\xi_1}^{k} \left[1-c \big(\xi^n_2/\left<\xi\right> \big)\right] \, \pdh_{\xi_2}^{\ag_2-l}   \pdh_{\xi_1}^{\ag_1-k}  a(x,y,\xi) \, \bH(\xi_2). $$
Since $c(\tau)=1$ for $|\tau| \leq 1$ and $c(\tau)=0$ for $|\tau| \geq 2$, 
$$\big| \pdh_{\xi_2}^{l} \pdh_{\xi_1}^{k}  \left[1-c \big(\xi^n_2/\left<\xi\right> \big)\right]\big| \leq C \left<\xi_1 \right>^{-\frac{l}{n} -k}.$$
Noting that $a \in S^m((\rN^2 \times \rN^2) \times \rN^2)$, we obtain
$$|\pdh_\xi^{\ag} b_1(x,y,\xi)| \leq C \left< \xi \right>^{m-\frac{|\ag|}{n}}.$$
It is, hence, straight forward to show that $b_1 \in  S^m_{\frac{1}{n},0}((\rN^2 \times \rN^2) \times \rN^2)$. Therefore,
$\mK_1 \in I^m_{\frac{1}{n},0}(\Delta)$.

Let $V \subset \{\xi \in \rN^2: \xi_2 > 0\}$ be a closed conic set. Then, there is $\eg>0$ such that $|\xi_2| \geq \eg |\xi_1|$ for all $\xi \in V$.  Hence, for big enough $R>0$, 
$$|\xi_2|^n \geq 2 |\xi_1|,\quad \mbox{ if } \xi \in V \mbox{ and } |\xi| \geq R.$$
That is,
$$b_1(x,y,\xi) = a(x,y,\xi),  \quad \mbox{ of } \xi \in V \mbox{ and } |\xi| \geq R.$$
Thus, $$b_1 - a \in S_{\frac{1}{n},0}^{-\infty}(\rN^2 \times V).$$
On the other hand, using the formula of symbol (e..g, \cite[(2.1.4)]{hormander71fourier}), we obtain: $$\sg(x,\xi) - b_1(x,x,\xi) \in S_{\frac{1}{n},0}^{m-\frac{1}{n}}(\rN^2 \times \rN^2).$$ Therefore,
$$\sg(x,\xi) - a(x,x,\xi) \in S_{\frac{1}{n},0}^{m-\frac{1}{n}}(\rN^2 \times V).$$
This proves i).
On the other hand, assume that $V \subset \{\xi \in \rN^2: \xi_2 < 0\}$ is a closed conic set. Then 
$$b_1(x,y,\xi) =0,\quad (x,\xi) \in \rN^2 \times V.$$
This implies $$\sg(x,\xi)  \in S_{\frac{1}{n},0}^{-\infty}(\rN^2 \times V).$$
We, therefore, have proved ii) and finished the analysis of $\mK_1$.

\medskip

We now analyze $\mK_2$. Let us write
\barni \mK_2(x,y) =\frac{1}{(2\pi)^2} \intl_{\rN} e^{i (x_1-y_1) \, \xi_1 } b_2(x,y,\xi_1) \, d \xi_1,\earni
where
\barni b_2(x,y,\xi_1) =\frac{1}{(2\pi)^2} \intl_{\rN} e^{i (x_2-y_2) \xi_2 } \,c\big(\xi^n_2/\la \xi \ra \big) \, a(x,y,\xi) \, \bH(\pm \xi_2) \, d \xi_2. \earni
Since $c(\tau)=0$ for $|\tau| \geq 2$, there is $M= M(n)$ such that for all $\xi$ satisfying $c\big(\xi^n_2/\la \xi \ra \big) \neq 0$: $$|\xi_2| \leq M \la \xi_1 \ra^{\frac{1}{n}}.$$ 
Therefore, since $a \in S^m((\rN^2 \times \rN^2) \times \rN^2)$, for any compact set $K \Subset \rN^2 \times \rN^2$:
\barni |b_2(x,y,\xi_1)| \leq C\intl_0^{M \la \xi_1 \ra^{\frac{1}{n}}}  \la \xi \ra^{m}  d \xi_2 \leq C \la \xi_1 \ra^{m+ \frac{1}{n}}, \quad \forall (x,y) \in K. \earni
Similarly, we obtain
\barni |\partial_x^\ag \partial_y^\bg \partial_{\xi_1}^l \, b_2(x,y,\xi_1)| \leq C \la \xi_1 \ra^{m+ \frac{1}{n}+ \ag_2 \frac{1}{n}+ \bg_2 \frac{1}{n} - l}. \earni
Therefore, $$b_2 \in S_{1,\frac{1}{n}}^{m+ \frac{1}{n}}((\rN^2 \times \rN^2) \times \rN^2).$$We arrive to $\mK_2 \in S^{m + \frac{1}{n}- \frac{1}{2}}_{1,\frac{1}{n}}(\mC)$. We note here that the difference between the orders of $b_2$ and $\mK_2$ comes the following formula (see, e.g., \cite{hormander71fourier}):
 \begin{equation} \label{E:Order} \mbox{ order of } \mK_2= \mbox{ order of } b_2 + (N-n)/2, \end{equation} where $n=2$ is the dimension of the spatial variables $x,y$ and $N=1$ is the dimension of the phase variable $\xi_1$. The proof of  a) is finished. 

\medskip

Let us now prove b). Let $(x^*, \xi^*;  y^*, \xi^*) \in \mC \setminus \Delta$, then $x^*_2 \neq y^*_2$. Let $\mO \subset \rN^2 \times \rN^2$ be an open set containing $(x^*,y^*)$ such that $x_2 \neq y_2$ for any $(x,y) \in \mO$. It suffices to prove that $\mK_+|_{\mO}  \in I^{m-k-\frac{1}{2}}(\mC)$. Let us write
\begin{equation} \label{E:muc} \mK_+(x,y) = \frac{1}{(2 \pi)^2}  \intl_{\rN} e^{i (x_1-y_1) \xi_1 }  \, a_1(x,y,\xi_1) \,d \xi_1,\end{equation} where
$$a_1(x,y,\xi_1) = \intl_{\rN} e^{i (x_2-y_2) \xi_2 } \, a(x,y,\xi) \, \bH(\xi_2) \, d\xi_2 = \intl_{0}^\infty e^{i (x_2-y_2) \xi_2 } \, a(x,y,\xi)  \, d\xi_2.$$
Let us note that, due to (\ref{E:sup}), the above integral is, in fact, over a finite interval. Taking integration by parts $(k+1)$ times with respect to $\xi_2$, we obtain:
\barni a_1(x,y,\xi_1) = \frac{1}{[i(y_2 - x_2)]^{k+1}} \Big[\pdh_{\xi_2}^k a(x,y,\xi_1,0) +  \intl_{0}^\infty e^{i (x_2-y_2) \xi_2 } \,\pdh_{\xi_2}^{k+1}  a(x,y,\xi)  \, d\xi_2 \Big].
\earni
Since $a \in S^m(\mO \times \rN^2)$ the first term on the right hand side belongs $S^{m-k}(\mO \times \rN)$. Due to Lemma \ref{L:rN} below, the second term belongs to $S^{m-k}(\mO \times \rN)$\footnote{Using integration by parts once and apply Lemma \ref{L:rN}, one can show that this term is, in fact,  in $S^{m-k-1}(\mO \times \rN)$.}. Therefore, we arrive to $a_1 \in S^{m-k}(\mO \times \rN)$. Therefore, microlocal near $\mC \setminus \Delta$, $\mK_+ \in I^{m-k - \frac{1}{2}}(\mC)$. Here, we have used a formula similar to (\ref{E:Order}) to determine the order of $\mK_+$. This finishes the proof for b).
\end{proof}

\medskip

\begin{lemma} \label{L:rN} Assume that $\rho(x,y,\xi)  \in S^m((\rN^2 \times \rN^2) \times \rN)$, and for any $(x,y) \in \rN^2 \times \rN^2$ there is $M_{x,y}>0$: $$\supp \rho(x,y,.) \subset\{\xi: |\xi_2| \leq M_{x,y} \, |\xi_1|\}.$$ Let $\mO \subset \rN^2 \times \rN^2$ such that $x_2 \neq y_2$ for all $(x,y) \in \mO$. Define the function 
\barni R(x,y,\xi_1) = \intl_{0}^\infty e^{i (x_2-y_2) \xi_2 } \, \rho(x,y,\xi)  \, d\xi_2.
\earni
Then, $R(x,y,\xi_1) \in S^{m+1}(\mO \times \rN)$. 
\end{lemma}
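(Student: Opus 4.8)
The plan is to verify the symbol estimates defining $S^{m+1}(\mO\times\rN)$ directly, exploiting two features of the integral: the cone support in $\xi_2$ forces the $\xi_2$-integration to run effectively over the finite interval $0\le\xi_2\le M|\xi_1|$, and the phase $e^{i(x_2-y_2)\xi_2}$ is non-stationary on $\mO$ precisely because $x_2\neq y_2$ there. Fix a compact $K\Subset\mO$. On $K$ we have $|x_2-y_2|\ge \delta_K>0$, and we take the single constant $M=M_K=\sup_{(x,y)\in K}M_{x,y}$ (finite, as in the application where $\rho=\pdh_{\xi_2}^{k+1}a$ with $a$ obeying \mref{E:sup}). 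On the support $\la\xi\ra\sim\la\xi_1\ra$, so since $|\rho|\le C\la\xi\ra^m$, integrating over an interval of length $\sim|\xi_1|$ already gives $|R(x,y,\xi_1)|\le C\la\xi_1\ra^{m+1}$, the $l=0$ case. Because the integral is over a finite interval we may differentiate under it freely.

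Differentiation in $\xi_1$ and in $x_1,y_1$ is harmless: these derivatives fall only on $\rho$, each $\pdh_{\xi_1}$ lowers its order by one while preserving the cone support, and $\pdh_{x_1},\pdh_{y_1}$ leave the order unchanged. Hence the same length-$|\xi_1|$ estimate yields $|\pdh_{\xi_1}^{l}\pdh_{x_1}^{\cdot}\pdh_{y_1}^{\cdot}R|\le C\la\xi_1\ra^{m+1-l}$.

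The genuine difficulty is differentiation in $x_2$ and $y_2$, since the phase depends on them. By Leibniz, $\pdh_{x_2}^{a}\pdh_{y_2}^{b}$ applied to $e^{i(x_2-y_2)\xi_2}\rho$ produces a sum of terms $(i\xi_2)^{j}(-i\xi_2)^{k}e^{i(x_2-y_2)\xi_2}\,\pdh_{x_2}^{a-j}\pdh_{y_2}^{b-k}\rho$, i.e. terms $\xi_2^{r}e^{i(x_2-y_2)\xi_2}g$ with $0\le r\le a+b$ and $g=\pdh_{\xi_1}^{l}\pdh_{x}^{\ag'}\pdh_{y}^{\bg'}\rho\in S^{m-l}$ cone-supported. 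Each factor $\xi_2$ is as large as $|\xi_1|$ on the support, so the naive bound on $\intl_0^\infty \xi_2^{r}g\,d\xi_2$ would be $\la\xi_1\ra^{m-l+r+1}$, too large by $r$ orders. This is exactly where the hypothesis $x_2\neq y_2$ must be used, and it is the main obstacle.

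To recover the lost orders I would integrate by parts $r$ times in $\xi_2$ via $e^{i(x_2-y_2)\xi_2}=\frac{1}{i(x_2-y_2)}\pdh_{\xi_2}e^{i(x_2-y_2)\xi_2}$. Each step contributes a bounded factor $1/(x_2-y_2)$ (controlled by $1/\delta_K$ on $K$) and differentiates the amplitude $\xi_2^{r}g$ once in $\xi_2$. The contribution at the far end of the interval is absent since the integrand has compact $\xi_2$-support, while each boundary term at $\xi_2=0$ arising in the first $r$ steps still carries an uncancelled power $\xi_2^{s}$ with $s\ge1$ and therefore vanishes at $\xi_2=0$. After exactly $r$ integrations the amplitude is $\pdh_{\xi_2}^{r}(\xi_2^{r}g)=\sum_{s}\binom{r}{s}\tfrac{r!}{(r-s)!}\xi_2^{r-s}\pdh_{\xi_2}^{r-s}g$, in which every summand has ``effective order'' $(r-s)+\big(m-l-(r-s)\big)=m-l$; the naive bound over $[0,M|\xi_1|]$ then gives $\la\xi_1\ra^{m-l+1}=\la\xi_1\ra^{m+1-l}$. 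Summing over all Leibniz terms yields $|\pdh_{\xi_1}^{l}\pdh_{x}^{\ag}\pdh_{y}^{\bg}R|\le C_{K}\la\xi_1\ra^{m+1-l}$, which is exactly the defining estimate of $S^{m+1}(\mO\times\rN)$. The crux is the bookkeeping showing that each of the up to $a+b$ powers of $\xi_2$ produced by differentiating the phase is compensated by one integration by parts, with all boundary terms vanishing.
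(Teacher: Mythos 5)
Your proposal is correct and follows essentially the same route as the paper's own proof: expand the $x_2,y_2$ derivatives by Leibniz to isolate the powers $\xi_2^{k_1+k_2}$ coming from the phase, integrate by parts in $\xi_2$ exactly that many times using $e^{i(x_2-y_2)\xi_2}=\frac{1}{i(x_2-y_2)}\pdh_{\xi_2}e^{i(x_2-y_2)\xi_2}$ (boundary terms vanishing because of the leftover powers of $\xi_2$), and then bound the resulting order-$(m-l)$ amplitude over the interval $[0,M|\xi_1|]$ supplied by the cone-support hypothesis. Your explicit remark that one needs $\sup_{(x,y)\in K}M_{x,y}<\infty$ on compacts is a small point the paper glosses over, but it changes nothing of substance.
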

\begin{proof} We need to prove that for any multi-indices $\ag,\bg$, any integer $l \geq 0$, and compact set $K \Subset \mO$, there is $C= C_{K,\ag,\bg,l}>0$ such that:
\begin{equation} \label{E:rN} |\pdh_x^\ag \pdh_y^{\bg} \pdh_{\xi_1}^l R(x,y,\xi_1)| \leq C \,\la \xi_1 \ra^{m-l+1}, \quad \mbox{ for all } (x,y) \in K.\end{equation}
 We observe that $\pdh_x^\ag \pdh_y^{\bg} \pdh_{\xi_1}^l R(x,y,\xi_1) $ is equal to
\begin{multline*} \sum_{k_1=0}^{\ag_2} \sum_{k_2=0}^{\bg_2}  \int\limits_0^\infty  e^{i \, (x_2-y_2) \, \xi_2 } \,(i \xi_2)^{k_1} \,  (-i \xi_2)^{k_2}  \partial_{x_2}^{\ag_2-k_1}  \partial_{y_2}^{\bg_2-k_2} \partial_{x_1}^{\ag_1}  \partial_{y_1}^{\bg_1} \partial_{\xi_1}^l \rho(x,y,\xi) \, d \, \xi_2 \\=  \sum_{k_1=0}^{\ag_2} \sum_{k_2=0}^{\bg_2}  I_{k_1,k_2}(x,y,\xi_1).
\end{multline*}
Taking integration by parts with respect to $\xi_2$, we obtain
\begin{multline*} I_{k_1,k_2}(x,y,\xi_1) = \int\limits_0^\infty  e^{i \, (x_2-y_2) \, \xi_2 } \,(i \xi_2)^{k_1} \,  (-i \xi_2)^{k_2}  \partial_{x_2}^{\ag_2-k_1}  \partial_{y_2}^{\bg_2-k_2} \partial_{x_1}^{\ag_1}  \partial_{y_1}^{\bg_1} \partial_{\xi_1}^l \rho(x,y,\xi) \, d \, \xi_2 \\ = \frac{1}{[i(x_2-y_2)]^{k_1+k_2} } \int\limits_0^\infty  \partial_{\xi_2}^{k_1+k_2} \, [e^{i \, (x_2-y_2) \, \xi_2 }] \,(i \xi_2)^{k_1} \,  (-i \xi_2)^{k_2}  \partial_{x_2}^{\ag_2-k_1}  \partial_{y_2}^{\bg_2-k_2} \partial_{x_1}^{\ag_1}  \partial_{y_1}^{\bg_1} \partial_{\xi_1}^l \rho(x,y,\xi) \, d \, \xi_2
\\ = \frac{(-1)^{k_1+k_2}}{[i(x_2-y_2)]^{k_1+k_2} } \int\limits_0^\infty  e^{i \, (x_2-y_2) \, \xi_2 } \, \partial_{\xi_2}^{k_1+k_2} \, \big[(i \xi_2)^{k_1} \,  (-i \xi_2)^{k_2}  \partial_{x_2}^{\ag_2-k_1}  \partial_{y_2}^{\bg_2-k_2} \partial_{x_1}^{\ag_1}  \partial_{y_1}^{\bg_1} \partial_{\xi_1}^l \rho(x,y,\xi)\big]  \, d \, \xi_2.
\end{multline*}
The amplitude in the above integral is in $S^{m-l}(\mO \times \rN^2)$. Therefore,
\begin{equation} |I_{k_1,k_2}(x,y,\xi_1)|  \leq  C \int\limits_0^{M_{x,y} |\xi_1|} \la \xi \ra^{m-l} \, d \, \xi_2 \leq C \la \xi_1 \ra^{m-l+1},\quad \mbox{ for all } (x,y) \in K. 
\end{equation}
This proves (\ref{E:rN}) and finishes the proof.

\end{proof}

\noindent We now consider a generalization of $\mK_\pm$. Namely, let $\ve \in \rN^2$ be a unit vector and let us consider the distributions
\begin{equation} \label{E:mue} \mK_{\pm \ve} (x,y) = \frac{1}{(2 \pi)^2}\intl_{\rN^2} e^{i (x-y) \cdot \xi}  \, a_{\ve} (x,y, \xi) \, \bH(\pm \ve^\perp \cdot \xi) \, d \xi,\end{equation}
and the canonical relation:
$$\mC_\ve= \{(x,s \, \ve; x+ t \, \ve^\perp , s \, \ve): x \in \rN^2, \, t, s \in \rN, \, \ga \neq 0 \}. $$
\begin{prop}\label{P:Rot} We have
\begin{itemize}
\item[a)] For any $n>0$, \barni \mK_{\pm \ve} \in I^m_{\frac{1}{n},0} (\Delta) + I^{m+ \frac{1}{n} - \frac{1}{2}}_{1,\frac{1}{n}}(\mC_\ve).
\earni
\item[b)] Assume that $a_\ve(x,y,\xi)$ vanishes to order $k$ across the line $\ell=\{\xi= r \ve:  r \in \rN\}$. Then, $\mK_{\pm \ve}$ is microlocally in the space $I^{m-k-\frac{1}{2}}(\mC_\ve)$ near $\mC_\ve \setminus \Delta$.
\end{itemize}
\end{prop}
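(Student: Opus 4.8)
The plan is to reduce both parts to \refprop{P:decompose} by a single orthogonal rotation of the phase variable that carries the line $\ell$ onto the $\xi_1$-axis. Writing $\ve=(\cos\phi,\sin\phi)$ and $\ve^\perp=(-\sin\phi,\cos\phi)$, let $O\in SO(2)$ be the orthogonal map determined by $O\ve=e_1$ and $O\ve^\perp=e_2$, where $e_1=(1,0)$, $e_2=(0,1)$; equivalently $O^{T}e_1=\ve$ and $O^{T}e_2=\ve^\perp$. In the oscillatory integral \mref{E:mue} I would substitute $\eta=O\xi$ (so $d\xi=d\eta$, as $|\det O|=1$) and set $x'=Ox$, $y'=Oy$. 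Since $\ve^\perp\cdot\xi=(O\ve^\perp)\cdot(O\xi)=\eta_2$ and $(x-y)\cdot\xi=(x'-y')\cdot\eta$, this turns $\mK_{\pm\ve}(x,y)$ into a distribution $\widetilde{\mK}(x',y')$ of exactly the form in \refprop{P:decompose}, with amplitude $\tilde a(x',y',\eta)=a_\ve(O^{T}x',O^{T}y',O^{T}\eta)$ and Heaviside factor $\bH(\pm\eta_2)$.

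Because \refprop{P:decompose} requires the support condition \mref{E:sup}, I would first localize in $\xi$. Choose $\psi\in C^\infty(\rN^2\setminus0)$, homogeneous of degree zero, equal to $1$ on a closed conic neighborhood of $\ell$ and supported in a slightly larger cone, and split $a_\ve=\psi\,a_\ve+(1-\psi)\,a_\ve$. On $\supp(1-\psi)$ one has $\ve^\perp\cdot\xi\neq0$, so $\bH(\pm\ve^\perp\cdot\xi)$ is locally constant there; hence $(1-\psi)\,a_\ve\,\bH(\pm\ve^\perp\cdot\xi)\in S^m$, and the associated operator is an ordinary pseudodifferential operator whose kernel lies in $I^m_{1,0}(\Delta)\subset I^m_{\frac1n,0}(\Delta)$ and is smooth off $\Delta$. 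It therefore suffices to treat $\psi\,a_\ve$, which is supported in a cone about $\ell$; after the substitution this cone becomes a cone about the $\eta_1$-axis, so the transformed amplitude $\tilde a$ satisfies \mref{E:sup} and \refprop{P:decompose} applies to $\widetilde{\mK}$.

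Proposition~\ref{P:decompose}~a) now gives $\widetilde{\mK}\in I^m_{\frac1n,0}(\Delta)+I^{m+\frac1n-\frac12}_{1,\frac1n}(\mC)$. Transporting this back under $x\mapsto O^{T}x'$ is routine once two facts are in place. First, the symbol classes $S^m_{\varrho,\delta}$ are invariant under the fixed linear orthogonal substitution $\eta\mapsto O^{T}\eta$: the entries of $O$ are constants and $|O^{T}\eta|=|\eta|$, so every anisotropic estimate $|\pdh_\eta^{\ag}\pdh_{x'}^{\bg}(\cdot)|\le C\la\eta\ra^{\,m-\varrho|\ag|+\delta|\bg|}$ is preserved verbatim. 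Second, the induced symplectomorphism $(x,\xi)\mapsto(Ox,O\xi)$, applied on both factors, carries $\mC_\ve$ onto $\mC$ and $\Delta$ onto itself: a generic point $(x,s\ve;x+t\ve^\perp,s\ve)\in\mC_\ve$ is sent to $(Ox,s e_1;Ox+t e_2,s e_1)\in\mC$ using $O\ve=e_1$, $O\ve^\perp=e_2$. Hence $\widetilde{\mK}\in I(\mC)$ pulls back to $\mK_{\pm\ve}\in I(\mC_\ve)$ with the same orders and types, proving a). The refined symbol information of \refprop{P:decompose}~a)~i),ii) transports identically, the half-planes $\{\pm\eta_2>0\}$ corresponding to $\{\pm\ve^\perp\cdot\xi>0\}$ and $\tilde a(x',x',\eta)$ to $a_\ve(x,x,\xi)$.

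For b), choosing $\psi\equiv1$ near $\ell$ so that the vanishing order is not disturbed, the hypothesis that $a_\ve$ vanishes to order $k$ across $\ell$ becomes vanishing of $\tilde a$ to order $k$ at $\eta_2=0$; moreover only $\psi\,a_\ve$ matters near $\mC_\ve\setminus\Delta$, since the complementary piece is smoothing off $\Delta$. \refprop{P:decompose}~b) then gives $\widetilde{\mK}\in I^{m-k-\frac12}(\mC)$ microlocally near $\mC\setminus\Delta$, and transporting back yields $\mK_{\pm\ve}\in I^{m-k-\frac12}(\mC_\ve)$ microlocally near $\mC_\ve\setminus\Delta$. The one point that genuinely requires care — and the main obstacle — is the invariance of the non-classical classes $I^m_{\varrho,\delta}$ and of the non-graph Lagrangians $\mC,\mC_\ve$ under the change of variables: for a general diffeomorphism the anisotropic $(\varrho,\delta)$-estimates near $\ell$ could be distorted. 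What rescues the argument is that the substitution is a \emph{single fixed orthogonal} linear map, which acts isometrically on $\xi$ and hence leaves both the symbol estimates and the cone geometry about $\ell$ exactly invariant. I would make this rigorous by substituting directly into the oscillatory-integral representations of $\mK_1$ and $\mK_2$ constructed in the proof of \refprop{P:decompose}, rather than by invoking abstract Fourier-integral transformation laws.
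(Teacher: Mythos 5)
Your proposal is correct and follows essentially the same route as the paper: the paper's entire proof is the remark that Proposition~\ref{P:Rot} ``can be easily derived from Proposition~\ref{P:decompose}, by a rotation argument,'' and your fixed orthogonal map $O$ with $O\ve=e_1$, $O\ve^\perp=e_2$, together with the verification that the classes $S^m_{\varrho,\delta}$, the Lagrangians $\Delta$, $\mC$, $\mC_\ve$, and the vanishing-order hypothesis all transport exactly under this isometric linear substitution, is precisely that argument spelled out. Your preliminary conic splitting $a_\ve=\psi\,a_\ve+(1-\psi)\,a_\ve$ to secure the support condition \mref{E:sup} (the piece away from $\ell$ being an ordinary pseudodifferential kernel, smooth off $\Delta$) is a correct and sensible supplement to a detail the paper leaves implicit.
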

\refprop{P:Rot} can be easily derived from \refprop{P:decompose}, by a rotation argument. We skip it for the sake of brevity.

\medskip

\subsection{Proof of Theorem \ref{T:Main1}} \label{S:Proof}
\noindent For $j=1,2$, let $\rho_j \in C^\infty(\rN^2 \setminus 0)$ be homogeneous of degree zero such that $\rho_j =1$ in a (small) conic neighborhood of $\ell_j \setminus 0$. Moreover, $\rho_j$ is supported inside a small conic neighborhood of $\ell_j$ \footnote{This, in particular, implies $\supp (\rho_j) \cap \supp (\rho_k) =\{0\}$, for $j \neq k$.}. 

\medskip

We can write:
\begin{eqnarray*} 
 \mK(x,y) &=& \frac{1}{(2 \pi)^2}\intl_{\rN^2} e^{i (x-y) \cdot \xi } a(x,y,\xi)\, \chi(\xi) \, d \xi \\ &=&  \frac{1}{(2 \pi)^2}\intl_{\rN^2} e^{i (x-y) \cdot \xi }  \, \Big[1- \sum_{j=1}^2 \rho_j(\xi) \Big] \, a(x,y,\xi)\, \chi(\xi) \, d \xi \\ &+& \sum_{j=1}^2 \frac{1}{(2 \pi)^2}\intl_{\rN^2} e^{i (x-y) \cdot \xi }  \,  \rho_j(\xi) \, a(x,y,\xi)\, \chi(\xi) \, d \xi \\&=& \mK^0(x,y) + \sum_{j=1}^2 \mK^j(x,y).
\end{eqnarray*}
We notice that $\mK^0 \in I^m(\Delta)$ with the amplitude function $$ \Big[1- \sum_{j=1}^2 \rho_j(\xi) \Big] \, a(x,y,\xi)\, \chi(\xi) \in S^m((\rN^2 \times \rN^2) \times \rN^2).$$
Applying the \refprop{P:Rot} for $\mK^1$ and $\mK^2$, we finish the proof.

\section*{Acknowledgements}
The author is thankful to Peter Kuchment for his comments and suggestions. 

\def\dbar{\leavevmode\hbox to 0pt{\hskip.2ex \accent"16\hss}d}

\end{document}